\newtheorem{Assumption}{Assumption}
\newtheorem{Proposition}{Proposition}
\newtheorem{Problem}{Problem}
\newtheorem{Definition}{Definition}
\newtheorem{Theorem}{Theorem}
\DeclareMathOperator*{\diag}{diag}
\DeclareMathOperator{\Tr}{Tr}
\DeclareMathAlphabet{\pazocal}{OMS}{zplm}{m}{n}
\DeclareMathAlphabet{\pazocal}{OMS}{zplm}{m}{n}
\begin{document}

\begin{frontmatter}

\title{Mixed $H_2/H_{\infty}$ Control of Delayed Markov Jump Linear Systems}

\author[myfirstaddress,mysecondaddress]{Wenjie Mei\corref{mycorrespondingauthor}}
\cortext[mycorrespondingauthor]{Corresponding author.}
\ead{Wenjie.Mei@inria.fr}

\address[myfirstaddress]{Division of Information Science, Nara Institute of Science and Technology, Nara 630-0101, Japan}
\address[mysecondaddress]{Current affiliation: Inria, Univ. Lille, CNRS, UMR 9189 - CRIStAL,
F-59000 Lille, France}
\address[mythirdaddress]{Current affiliation: Graduate School of Information Science and Technology,
Osaka University, Osaka 565-0871, Japan}

\author[myfirstaddress]{Chengyan Zhao}
\author[myfirstaddress,mythirdaddress]{Masaki Ogura}

\author[myfirstaddress]{Kenji Sugimoto}

\begin{abstract}
This paper investigates state feedback control laws for Markov jump linear systems with state and mode-observation delays. An assumption in this study is that the delay of mode observation obeys an exponential distribution. Also, we raise an unknown time-varying state delay applied in the composition of the state feedback controller. A method of remodeling the closed-loop system as a standard Markov jump linear system with state delay is shown. Furthermore, on the basis of this remodeling, several Linear Matrix Inequalities (LMI) for designing feedback gains for stabilization and mixed $H_2/H_{\infty}$ control are proposed. Finally, we apply a numerical simulation for examining the effectiveness of the proposed mixed $H_2/H_{\infty}$ controller designing method. 
\end{abstract}

\begin{keyword}
Markov jump linear systems, mode-observation delay, LMI
\end{keyword}

\end{frontmatter}

\section{Introduction}\label{sec1}

Markov jump linear systems are an essential subclass of switched linear systems and are customarily applied to model the system with sudden variations in their constructions, which in part results from the inherent vulnerability of dynamic systems to component failures, unanticipated environmental interference, alternation of subsystems' interconnection, and unexpected variations in the
operations of plants~[1]. 

Time delays appear in many physical processes and usually cause low performance, oscillation, and instability.  Thus, the Markov jump linear systems with time delays have received extensive attention. The existing studies are mainly of two classes: delay-independent and delay-dependent approaches. The investigations~[2--4] showed that delay-dependent consequences are usually stricter than the ones in the delay-independent case, particularly if the delay's range is not large. The majority of existing investigations share a hypothesis that the controller detects the state and the current system mode simultaneously. Nevertheless, it unavoidably costs time for identifying the mode of systems and the state, and then switches to the target controller in reality. In~[5] some chemical systems illustrate the necessity of the research of asynchronous switching for control with high efficiency. The work in~[6--9] investigated the principal control method to let an unstable system keep stable with or without time delays, i.e., set up appropriate input signals to achieve the above goal, which requires knowledge of the current system mode, state or output. However, the methods would be failed since the difficulty to receive the information of the current system mode, state, and output. To solve the issue, Xiong et al.~[10] designed time-delay controllers by the past system information, rather than the current system information. Another important type of delay is the delay of the observation process when observing the switching of the mode signal. In this case, the observation signal is usually assumed to follow a regulation, e. g., a renewal process. Cetinkaya et al.~[11] devised almost-surely state feedback controllers for stabilization with resetting of gains whenever an observation occurs. The authors in~[12, 13] devised state-feedback controllers for stabilization with periodical observations. From above, we see that there is usually no efficient method to obtain the immediate information of switching signal, thus in this paper, we assume that there exists mode-observation delay following an exponential distribution, which occurs naturally when describing the time for a continuous process to change state when measuring the signal.

Delay-dependent stochastic stability is widely used in stability analysis of Markovian jump systems with delay. It is a fundamental property of state stability and is much more complex than in the deterministic case. It also presents some interesting performance and has been studied in many practical fields, e.g., Hopfield neural networks~\textcolor{green}{[14]}, uncertain neutral stochastic systems~\textcolor{green}{[15]}. Some studies on feedback control of Markov jump systems with delay were presented in such as~\textcolor{green}{[16,17,18,19,20]}. Several existing studies of $H_2/H_{\infty}$ control  with feedback law of switched systems including Markov jump systems are: Guaranteed cost control of uncertain periodic piecewise linear systems~\textcolor{green}{[21]}; $H_2/H_{\infty}$ control of switched TS fuzzy systems with time-delay~\textcolor{green}{[22]}; mixed $H_2/H_{\infty}$ control of stochastic systems~\textcolor{green}{[23]}; $H_{\infty}$ control for Markov jump systems with bounded transition probabilities~\textcolor{green}{[24]} and uncertain transition probabilities~\textcolor{green}{[25]}, and subject to actuator saturation~\textcolor{green}{[26]}, to mention a few results.

Although there exist numerous studies on $H_2/H_{\infty}$ control of Markov jump systems, there is still a lack of research on mixed $H_2/H_{\infty}$ control, as well as the mode-observation delay. Compared with the existing works, the main advantage of this study is that based on the assumption on the mode-observation with reflecting the general occurrence of observation delay and its properties, it then becomes possible to use existing methodologies of standard delayed Markov jump systems to remodel the system with state and mode-observation delays since the extended state remains a Markov process under the assumption. Without the assumption, the Markov jump linear systems with state and mode-observation delays would not be in a standard form, which increases the difficulty of studying their stability and control. A novel framework for studying state feedback control laws for continuous-time Markov jump linear systems with state and mode-observation delays is proposed in the paper. Furthermore, for realizing stabilization and mixed $H_2/H_{\infty}$ control we raise a set of LMIs to achieve state feedback control. 

The remaining of this paper is organized as follows: The problem of state feedback control of Markov jump linear systems with state and mode-observation delays is presented in Section~\ref{probelm_formulate}. We study that it is possible to transform the closed-loop system into a Markov jump linear system in a standard form by integrating the mode signal and the observation signal in the closed-loop system into a Markov process in Section~\ref{reduction}. In Section~\ref{main_result},  for stabilization and mixed $H_2/H_{\infty}$ control, we formulate an LMI framework for designing a state feedback controller. Section~\ref{numerical_example} shows a numerical example.

\subsection*{Notation}
This paper's notation is standard. $\mathbb{Z}_{+}$, $\mathbb{R}$, and $\mathbb{R_{+}}$ represent the sets of positive integers, real numbers, and positive real numbers. For $n,m \in \mathbb{Z}_{+}$, we use $\mathbb{R}^n$ and $\mathbb{R}^{n \times m}$ to indicate the spaces of $n$-dimensional real vectors and $n \times m$ real matrices. The Euclidean norm on $\mathbb{R}^n$, the probability of an event, and the mathematical expectations are denoted by $\rVert \cdot \rVert$, ${\rm{Pr}}(\cdot)$, and $E[\cdot]$, respectively. Given any real matrix $A$, we use $A^\top$ and $\Tr (A)$ to indicate the transpose and the trace of matrix $A$, respectively. We let $A>0$ represent that $A$ is positive-definite. When $A$ is symmetric, the maximum and minimal eigenvalue of $A$ are denoted by $\lambda_{\max}(A)$ and $\lambda_{\min}(A)$, respectively. We let $\diag(\cdot)$ represent a block diagonal matrix. Let $I_{m}$ denote the $m \times m$ identify matrix. 
Let $\bf{1}$ denote indicator functions. In this paper, we use $\star$ to denote the symmetric blocks of partitioned symmetric matrices. Given $q \in \mathbb{Z}_{+}$, we denote by $\mathcal{L}_2([0,\infty),\mathbb{R}^q)$ the $q$-dimensional vector space of all functions $f \colon [0,\infty) \to \mathbb{R}$ satisfying  $\int_0^{\infty} \rVert f(t) \rVert^2 dt < \infty$. Given $\tau \in \mathbb{R}_{+}$, we let $\mathcal{C}([-\tau,0] \rightarrow \mathbb{R}^n)$ denote the space of functions mapping $[-\tau,0]$ to $\mathbb{R}^n$.

\section{Problem Statement}
\label{probelm_formulate}

We introduce the considered problems in this paper in this section. Let $r = \{r(t)\}_{t\geq 0}$ be a continuous-time Markov process taking values in $\Theta=\{1,2,\dotsc,N\}$ and having the following transition probabilities for all $h>0$ and  $i, j \in \Theta$:
\begin{equation} \label{markov_ process_r}
{\rm{Pr}}(r(t+h)=j \mid r(t)=i) = 
\left \{  
\begin{aligned}
&  \pi_{ij} h +  \mathcal{O}(h), & \mbox{if} \ i \neq j, &  \\ 
&  1  +\pi_{ii}h +\mathcal{O}(h), & \mbox{if} \  i=j, &    
\end{aligned}  
\right.
\end{equation}
where $\lim_{h \to \infty} {\mathcal{O}(h)}/h = 0$ and $\sum_{j \in \Theta \backslash \{i\} }\pi_{ij}=-\pi_{ii}$. The initial condition of $r$ is denoted by $r(0) =r_0$.
For each $i\in \Theta$, we let $A_{i} \in \mathbb{R}^{n \times n}, B_{i} \in \mathbb{R}^{n \times m}, C_{i} \in \mathbb{R}^{l \times n}, J_{i} \in \mathbb{R}^{l \times n}, E_{i} \in \mathbb{R}^{n \times q}, \varPsi_{i} \in  \mathbb{R}^{l \times q}$, and $\varPhi_{i} \in  \mathbb{R}^{l \times q}$ be matrices. We consider the Markov jump linear system
\begin{equation} \label{mixed H control}
\Sigma \colon \left\{ 
\begin{aligned}
\dot x (t) &= A_{r(t)}x(t)+B_{r(t)}u(t) + E_{r(t)} w(t), \\ 
z(t) &= C_{r(t)}x(t) +   \varPsi_{r(t)}w(t), \\
y(t) &= J_{r(t)}x(t) +\varPhi_{r(t)}w(t), 
\end{aligned}
\right.
\end{equation}
where $x \in \mathbb{R}^n$ is the state vector, $u \in \mathbb{R}^m$ is the input, $w \in \mathcal{L}_2([0,\infty),\mathbb{R}^q)$ is a disturbance, $z$ is the controlled output, while $y$ is the measurable output. 


\subsection{State and mode observation delays}

We investigate the problem of mixed $H_2/H_{\infty}$ control of the system~$\Sigma$ by applying delayed state-feedback with mode-dependent feedback gains in this paper. We specifically allow delays in the measurement of both $x$ and~$r$. Therefore, we introduce the state feedback control
\begin{equation} \label{controller}
u(t)=K_{\tilde r(t)} x(t-\tau(t)), 
\end{equation}
where $K_1, \dotsc, K_N \in \mathbb{R}^{m \times n}$ are state-feedback gains  and $\tau$ represents an unknown time-varying delay with 
\begin{equation*}
\mbox{$\tau(t) \in [0, \tau_0]$,\   \ $\dot{\tau}(t) \in [0,\tau^{+}]$}
\end{equation*} 
for some positive constants $\tau_0$ and $\tau^{+} \in (0, 1]$, while $\tilde r$ represents a delayed measurement of the mode signal that is described as follows: 
\begin{enumerate}
	\item When the mode signal $r$ changes from $i$ to $j \in \Theta \backslash \{i\}$ at time $t$, a constant $h$ is drawn from a fixed probability distribution. We call the random variable~$h$ the mode observation delay. 
	\item If $r$ remains to be $j$ until time $t+h$, then the value of $\tilde r$ is set to $j$ at time $t+h$. 
	\item On the other hand, if $r$ changes its state before time $t+h$, then we go back to the first step.
\end{enumerate}
These properties are illustrated by Fig.~\ref{fig:random}. 

\begin{figure}[tb] 
	\centering{\includegraphics[width=0.35\textwidth]{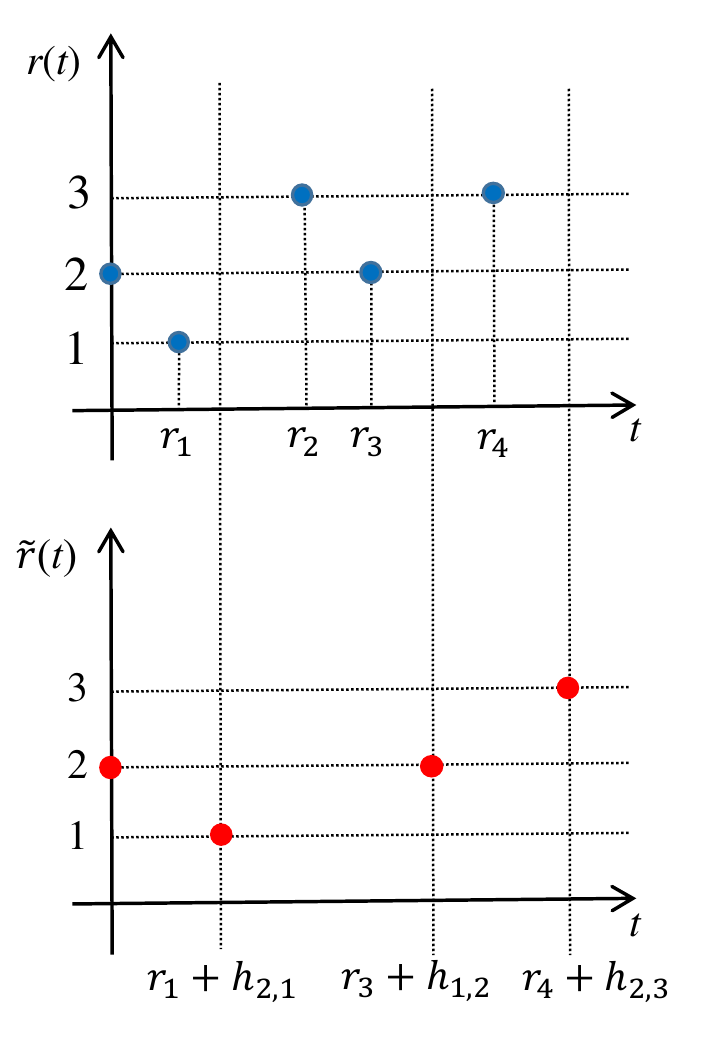}}
	\caption{{An observation of the mode signal with $\Theta = \{1,2,3\} $. Let $r_{\sigma}$ represent the $\sigma$-th switching of the process $r$ and $h_{j_1,j_2}$ represent the mode observation delay starting from the most recent $r$ of $\tilde{r} \coloneqq \{\tilde r(t)\}_{t \geq 0}$ switching from the state $j_1$ to the state $j_2$ as shown in this figure, where $\sigma \in \mathbb{Z}_{+}$, $j_1, j_2 \in \Theta$. Until the first observation $r_1 + h_{2,1}$, we let $\tilde r$ be set to $2$. For example, $r_1+h_{2,1}$ represents the mode observation time from $r_1$, in which $h_{2,1}$ represents a mode observation delay of $\tilde r$ jumping from the state $2$ to the state $1$. In a similar way, $h_{1,2}$ denotes a mode observation delay under the switching of $\tilde r$ from the state $1$ to the state $2$.}
\label{fig:random}}
\end{figure}

We place the assumption on the mode observation delay in this paper as follows.

\begin{Assumption}\label{asm:}
The mode observation delay $h_{i,j}$ from the current observation state $i$ to the correct mode $j$ follows an exponential distribution with rate $g_{ij}> 0$ for each $i, j \in \Theta$. 
\end{Assumption} 

\subsection{Problem formulation}

The system~$\Sigma$ and the state feedback control~\eqref{controller} yield the following closed-loop system 
\begin{equation}  \label{stabilized_Markov jump linear system}
\Sigma_K \colon \left\{ 
\begin{aligned} 
\dot x (t) &= A_{r(t)}x(t) + B_{r(t)}K_{\tilde r(t)} x(t-\tau(t)) +E_{r(t)} w(t), \\ 
&x(\varepsilon) = \phi(\varepsilon), \varepsilon \in [-\tau_0,0], \\
z(t) &= C_{r(t)}x(t) +  \varPsi_{r(t)}w(t), \\
y(t) &= J_{r(t)}x(t) +  \varPhi_{r(t)}w(t), \\
\end{aligned}
\right.
\end{equation}
where $\phi \in \mathcal{C}([-\tau_0,0] \rightarrow \mathbb{R}^n)$ is the initial state. The weak delay-dependent stochastic stability for the system~$\Sigma_K$ is defined as follows. 

\begin{Definition}
If there exists a real number $C((r_0, \tilde r_0), \phi) > 0$ for every $((r_0, \tilde r_0), \phi)$ such that 
	\begin{equation}
	\begin{aligned} 
	 E \bigg[\int_0^{\infty} \rVert x(t,\phi) \rVert^2 dt  \mid (r_0, \tilde r_0), & \phi(\varepsilon), \varepsilon \in [-\tau_0,0] \bigg] \\  
	&\leq C((r_0, \tilde r_0), \phi) 
	\end{aligned}
	\end{equation}
    for all $w \in \mathcal{L}_2([0,\infty),\mathbb{R}^q)$, then~$\Sigma_K$ is said to be \emph{weakly delay-dependent stochastically stable}. 
\end{Definition}

We then introduce the $H_2$ and $H_\infty$ performance measures of~$\Sigma_K$. 

\begin{Definition}
	Let a constant $\gamma>0$ be given. Consider the system~$\Sigma_K$, we define the $H_2$ performance measure as 
	\begin{equation} \label{h2 performance}
	\mathcal{H}_2 = E \bigg[ \int_0^{\infty} z^\top (t) z(t) dt \bigg], 
	\end{equation}
	and $H_{\infty}$ performance measure as
	\begin{equation}\label{hinf performance}
	\mathcal{H}_{\infty} = E \bigg[ \int_0^{\infty} \Big( - \gamma^2 w^\top(t) w(t) + y^\top (t) y(t) \Big) dt \bigg], 
	\end{equation}
	respectively. 
\end{Definition}

We now illustrate the problem that we study in this paper. 

\begin{Problem} 
Let $f_2>0$, $f_{\infty} >0$, and $\gamma >0$ be given. Find state-feedback gains $K_1$, \dots, $K_N$ such that the  system~$\Sigma_K$ is weakly delay-dependent stochastically stable and 
	\begin{equation} \label{h_2 and h_inf measure}
	\mathcal{H}_2 \leq f_2 \ \mbox{and} \ \sup_{w \in \mathcal{L}_2([0,\infty),\mathbb{R}^q)} \mathcal{H}_{\infty} \leq  f_{\infty}
	\end{equation}
	are satisfied. 
\end{Problem}

We remark that~$\Sigma_K$ is not a delayed Markov jump linear system in a standard form since there is a random mode delay in the observation signal. For this reason, we cannot apply the methodologies of controlling standard delayed Markov jump linear systems available in the literature. 
To address this issue, we show that the system~$\Sigma_K$ can be remodeled as a standard Markov jump linear system with state delay in the next section.

\section{Equivalent Reduction} \label{reduction}

We remodel~the system $\Sigma_K$ as a delayed Markov jump linear system in a standard form by embedding $r$ and $\tilde r$ in the system $\Sigma_K$ to a Markov process in this section. 
%
%
%
%
	For the system~$\Sigma_K$, we introduce the Markov process 
	\begin{equation}
	s(t) = (r(t),\tilde r(t)),
	\end{equation}
	which gathers the continuous-valued stochastic processes $r$ and $\tilde r$. Besides, let $\mathcal{D}$ denote the set of $\xi= (i,j) \in \Theta \times \Theta$. The set $\mathcal{D}$ has all probable values taken by $s = \{s(t)\}_{t \geq 0}$. The initial condition $s(0)$ of $s$ is denoted by $s_0$.

Let us then show the following proposition, which allows us to rewrite $\Sigma_K$ as a standard delayed Markov jump linear system. The proof for Proposition~\ref{reduction_MJLS} is omitted because it is a direct result considering the definition of the observation process $\tilde r$ as well as Assumption~\ref{asm:}.

\begin{Proposition} \label{reduction_MJLS}
The stochastic process $s$ is a time-homogeneous Markov process taking values in $\mathcal{D}$. Also, the transition rate from $\xi = (i_1,j_1) \in \mathcal{D}$ to $\xi' = (i_2,j_2) \in \mathcal{D}$ is given by
	\begin{alignat}{3}
	& q_{\xi \xi'}   = 
	\left \{  
	\begin{aligned}
	&  {\bf{1}}(j_1 = j_2) \pi_{i_1i_2} ,  && \mbox{if no observation} \\ 
	&  {\bf{1}}(i_1 = i_2 = j_2) g_{j_1 j_2, }, && \mbox{otherwise}.    
	\end{aligned}  
	\right.  
	\end{alignat}
	
\end{Proposition}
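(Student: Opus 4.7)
The plan is to verify the two claims separately: first that $s = \{s(t)\}_{t \geq 0}$ is a time-homogeneous Markov process on $\mathcal{D}$, and then that its infinitesimal generator has the stated entries.

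For the Markov property, I would argue as follows. The component $r$ is itself a time-homogeneous Markov chain with rates $\pi_{ij}$, so given $r(t) = i_1$ the future of $r$ is conditionally independent of its history. The only additional randomness entering $s$ is the clock that triggers the next update of $\tilde r$. Whenever $r$ enters a state $i$ while $\tilde r = j$ with $i \neq j$, a fresh observation delay is started which is, by Assumption~\ref{asm:}, exponentially distributed with rate $g_{ji}$. By the memoryless property of the exponential distribution, conditional on not having fired yet and on the current value of $(r, \tilde r)$, the residual time until this clock fires is again exponential with rate $g_{\tilde r(t), r(t)}$, independently of how long ago it was started and of the prior history. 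Consequently, the conditional law of the future of $s$ given its trajectory up to $t$ depends only on $s(t)$, which is the Markov property. Time-homogeneity is inherited from the time-homogeneity of $r$ and from the fact that the law of each newly drawn delay depends only on $(\tilde r, r)$ and not on absolute time.

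For the transition rates, I would perform a case analysis on the current state $\xi = (i_1, j_1)$ and compute, for small $h > 0$, the infinitesimal probability of moving to any other state $\xi' = (i_2, j_2)$. When $i_1 = j_1$, no observation clock is running, so $s$ can only leave $\xi$ through a jump of $r$; this yields transitions to $(i_2, j_1)$ with probability $\pi_{i_1 i_2} h + o(h)$, matching the ``no observation'' row with $j_1 = j_2$. When $i_1 \neq j_1$, there are two competing independent exponential clocks: the transition clock of $r$, which fires at rate $\pi_{i_1 i_2}$ for each $i_2 \neq i_1$ and produces a jump $\xi \to (i_2, j_1)$ (the old observation clock being discarded and a new one restarted, which contributes nothing at the infinitesimal level), and the observation clock itself, which fires at rate $g_{j_1 i_1}$ and causes $\tilde r$ to jump to $i_1$, i.e., $\xi \to (i_1, i_1)$. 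Setting $j_2 = i_1$ in the latter case recovers the second branch $\mathbf{1}(i_1 = i_2 = j_2)\, g_{j_1 j_2}$, while all other pairs $\xi'$ receive probability $o(h)$.

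The step that requires the most care is the ``reset'' nature of the observation clock whenever $r$ jumps before $\tilde r$ catches up: the old clock is discarded and a new one with a possibly different rate is drawn. What renders this step harmless is precisely the memoryless property together with the fact that the rate $g_{\tilde r(t^+), r(t^+)}$ of the newly drawn clock is fully determined by the post-jump state $s(t^+)$; any future-looking description can therefore be written purely in terms of $s$ and does not leak into the generator entries through the history. With this observation in hand, the verification of the two branches above is immediate, which is why the proposition follows as a direct consequence of the definition of $\tilde r$ together with Assumption~\ref{asm:}.
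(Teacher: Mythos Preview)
Your proposal is correct and follows exactly the route the paper indicates: the paper omits the proof entirely, stating only that the result is ``a direct result considering the definition of the observation process $\tilde r$ as well as Assumption~\ref{asm:}.'' You have simply written out what that direct argument is---the memoryless property of the exponential observation clocks (Assumption~\ref{asm:}) combined with the description of how $\tilde r$ is updated---so there is nothing to contrast.
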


For a family of matrices $\{D_1, \dotsc, D_N\}$, we use the notations
\begin{equation}
\hat D_{i,j} = D_i
\end{equation}
and 
\begin{equation}
\check D_{i,j} = D_j
\end{equation}
for all $i, j \in \Theta$. Then, by Proposition~\ref{reduction_MJLS}, we see that the system~$\Sigma_K$ can be rewritten to the following standard delayed Markov jump linear system:
\begin{equation}  
\bar \Sigma_K \colon \left\{ 
\begin{aligned} 
\dot x (t) &= \hat A_{s(t)}x(t)+ \hat B_{s(t)} \check K_{s(t)} x(t-\tau(t)) +\hat E_{s(t)} w(t), \\ &x(\varepsilon) = \phi(\varepsilon), \varepsilon \in [-\tau_0,0], \\
z(t) &= \hat C_{s(t)}x(t) +  \hat \varPsi_{s(t)}w(t), \\
y(t) &= \hat J_{s(t)}x(t) + \hat \varPhi_{s(t)}w(t). \\
\end{aligned}
\right.
\end{equation}

\section{Main result} \label{main_result}

The main result of devising a mixed $H_2/H_{\infty}$ controller is shown in this section. 
Throughout this paper, we fix a one-to-one mapping $\mathcal F \colon \Theta\times \Theta \mapsto \{1, \dotsc, N^2\}$. In the sequel, we use the notation 
\begin{equation} 
k_{ij} = \mathcal{F}((i,j)),\ \mathcal{F}_0 = \mathcal{F}(s_0).
\end{equation}
We also let the infinitesimal generator of the stochastic process $s$ be denoted by $\tilde{\mathcal{S}} = [\kappa_{kk'}]_{kk'}\in \mathbb{R}^{N^2\times N^2}$. 
The following theorem is for devising a state feedback controller having a prescribed $H_2$ and $H_\infty$ performance measure and is the main result in our study.

\begin{Theorem}\label{thm:main:control}
	Let $f_2 > 0, f_{\infty} > 0, \gamma>0$ be given constants and $L_k \in \mathbb{R}^{n \times n}>0$ be given matrices, respectively. For the system~$\Sigma_K$ the feedback gain $$K_j = Z_j Y^{-1}_j \in \mathbb{R}^{m \times n}$$ is a mixed  $H_2/H_{\infty}$ controller satisfying the limits of $H_2$ and $H_{\infty}$ performance measures in~\eqref{h_2 and h_inf measure} if there exist matrices $Y_j = Y_j^\top \in \mathbb{R}^{n \times n}$, $\lambda \in \mathbb{R}$, $\Lambda \in \mathbb{R}$, and $Z_j \in \mathbb{R}^{m \times n}$
satisfying the following system of LMIs: 
\\
	\begin{alignat}{3} \label{combineLMI}
& \lambda + \Lambda \leq \min\{f_2,f_{\infty}\};
\\
	& \begin{bmatrix} 
	\mathcal{U}_{k_{ij}} & \star & \star & \star & \star   \\
	Y_j & -L_{k_{ij}} & \star & \star & \star  \\
	C_iY_j & 0 & -I_l & \star & \star  \\
	Z_j^\top B_i^\top & 0 & 0 & I_n & \star \\
	\aleph_{k_{ij}}^\top & 0 & 0 & 0 & -\mathcal{Y}_{k_{ij}} \\
	\end{bmatrix} < 0; \\
	&\begin{bmatrix} 
	\mathcal{U}_{k_{ij}} & Y_j & Y_j J_i^\top & B_iZ_j & \aleph_{k_{ij}}  & E_i + Y_j J_i^\top \varPhi_i \\
	\star & -L_{k_{ij}} & 0 & 0 & 0 & 0 \\
	\star & \star & -I_l & 0 & 0 & 0 \\
	\star & \star & \star & I_n & 0 & 0\\
    \star & \star & \star & \star & -\mathcal{Y}_{k_{ij}} & 0\\
	\star & \star & \star & \star & \star & -\gamma^2I + \varPhi_i^\top \varPhi_i\\
	\end{bmatrix} < 0; \\
	& \begin{bmatrix}
	-\lambda  & \star  \\
	\phi(0) & -Y_j
	\end{bmatrix} \leq 0, 
	\begin{bmatrix}
	-\Lambda & \star  \\
	X & -\frac{1}{\lambda_{\max}(L_k^{-1})}
	\end{bmatrix}  \leq  0;  \\
	& Y_j>0; \; \lambda >0; \; \Lambda > 0
	\end{alignat}
	for all $k_{ij} \in \Theta \times \Theta, i, j \in \Theta$, where the matrices $\mathcal{U}_k$, $\mathcal{Y}_k$, $\aleph_k$, and $X$ are defined by
 \begin{alignat}{3}
	&\mathcal{U}_{k_{ij}} = Y_j A_i^\top + A_i Y_j + \kappa_{{k_{ij}}{k_{ij}}} Y_j \in \mathbb{R}^{n \times n}, \\  
	&\mathcal{Y}_{k_{ij}} = \diag(Y_1, \dots,Y_{{k_{ij}}-1}, Y_{{k_{ij}}+1},\dots, Y_{N^2}) \\ 
	& \quad \quad \quad \in \mathbb{R}^{(N^2-1) \times (N^2-1)}, \\
	&\aleph_{k_{ij}} = [\sqrt{\kappa_{{k_{ij}}1}}Y_j \dotsc \sqrt{\kappa_{{k_{ij}}({k_{ij}}-1)}}Y_j, \sqrt{\kappa_{{k_{ij}}({k_{ij}}+1)}}Y_j \dotsc \\ 
	&  \quad \quad \quad \sqrt{\kappa_{{k_{ij}} N^2}}Y_j] \in \mathbb{R}^{n \times (N^2-1)}, \\
	&X = \sqrt{\bigg( \int_{-\tau}^0 x^\top(t)x(t) dt \bigg)} \in \mathbb{R}.
   \end{alignat}
\end{Theorem}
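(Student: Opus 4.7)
The plan is to attack the theorem through the standard Lyapunov--Krasovskii route applied to the reduced system $\bar\Sigma_K$ from Section~\ref{reduction}, because once $s(t)=(r(t),\tilde r(t))$ has been shown to be Markov, we are dealing with a standard delayed Markov jump linear system and the infinitesimal generator $\tilde{\mathcal{S}}=[\kappa_{kk'}]$ encodes all stochastic behaviour. I would pick a mode-dependent Lyapunov--Krasovskii functional of the form
\begin{equation}
V(x_t,s(t)) = x^\top(t) P_{s(t)} x(t) + \int_{t-\tau(t)}^{t} x^\top(\sigma) \bar L_{s(t)} x(\sigma)\, d\sigma,
\end{equation}
compute the weak infinitesimal generator $\mathscr{L}V$ along trajectories of $\bar\Sigma_K$, and then derive three dissipation inequalities: (i) $\mathscr{L}V + x^\top x \le 0$ for weak delay-dependent stochastic stability, (ii) $\mathscr{L}V + z^\top z \le 0$ with $w\equiv 0$ to dominate the $H_2$ cost in~\eqref{h2 performance}, and (iii) $\mathscr{L}V + y^\top y - \gamma^2 w^\top w \le 0$ to dominate the $H_\infty$ cost in~\eqref{hinf performance}. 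Dynkin's formula then yields $E\!\left[\int_0^\infty z^\top z\,dt\right] \le E[V(x_0,s_0)]$ and the analogous $H_\infty$ bound, each of which is required to stay below $f_2$ and $f_\infty$ respectively.

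Next I would linearize the BMIs obtained above. The generator produces a cross term $\sum_{k'\ne k}\kappa_{kk'} P_{k'}$ that is handled by writing it as $P_k + \aleph_k \mathcal{Y}_k^{-1}\aleph_k^\top$-type block after congruence with $Y_k = P_{s}^{-1}$; this is precisely where the block $\aleph_{k_{ij}}$ and the block-diagonal $\mathcal{Y}_{k_{ij}}$ enter, via a Schur complement on the last $(N^2-1)$ rows/columns. The feedback gain appears as $B_i K_j Y_j = B_i Z_j$ under the change of variables $Z_j = K_j Y_j$, which is consistent with the asymmetry coded in the $\hat{\cdot}$/$\check{\cdot}$ notation: $\hat B_s = B_i$ depends on the true mode while $\check K_s = K_j$ depends on the observed mode. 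The quadratic $C_i^\top C_i$ appearing from $z^\top z$ is peeled off by Schur complement, producing the $(3,3)$ identity block $-I_l$; similarly the $Z_j^\top B_i^\top B_i Z_j$ residual that results from completing the square on the cross term $x(t)^\top P_i B_i K_j x(t-\tau)$ is extracted as the $(4,4)$ block. For the $H_\infty$ inequality the disturbance quadratic $-\gamma^2 I + \varPhi_i^\top\varPhi_i$ appears as the $(6,6)$ block and the cross terms through $E_i$ and $J_i^\top\varPhi_i$ populate the last column, matching the theorem statement exactly.

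The two $2\times 2$ LMIs in~\eqref{combineLMI} involving $\lambda$ and $\Lambda$ are pure Schur complements of the initial-condition bound $E[V(x_0,s_0)] \le \lambda + \Lambda$ split according to the two summands of $V$: $\phi(0)^\top P_{s_0}\phi(0) \le \lambda$ becomes the first block (after inserting $Y_j = P^{-1}$), while the integral piece $\int_{-\tau}^{0}\phi^\top \bar L_{s_0}\phi \le \Lambda$ is upper bounded via $\lambda_{\max}(L_k^{-1})$ and the quantity $X$, giving the second block. Combining with $\lambda + \Lambda \le \min\{f_2,f_\infty\}$ closes the chain of inequalities to $\mathcal{H}_2 \le f_2$ and $\mathcal{H}_\infty \le f_\infty$.

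The main obstacle I expect is keeping track of the delayed term $-(1-\dot\tau(t))x^\top(t-\tau(t))\bar L_s x(t-\tau(t))$ together with the mode-jump contribution to $\int_{t-\tau}^t x^\top \bar L_s x\,d\sigma$; using $\dot\tau\in[0,\tau^+]\subseteq[0,1]$ allows this term to be dominated by $-x^\top(t-\tau) L_{k_{ij}}x(t-\tau)/\text{const}$ and absorbed as the $(2,2)$ block $-L_{k_{ij}}$. A secondary technical point is justifying that once $\mathscr{L}V + x^\top x \le 0$ holds, $E[V]$ remains finite along trajectories driven by $w\in\mathcal{L}_2$, so that the weak stability definition applies with $w\ne 0$; this follows by a comparison argument using the finite $L_2$-gain from $w$ to $x$ guaranteed by the $H_\infty$ inequality. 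Everything else is an application of the Schur complement and of congruence transformation by $\diag(Y_j, I, \dots)$ to bring $P_i$-type entries into $Y_j$-type entries and $K_j$ into $Z_j$.
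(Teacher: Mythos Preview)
Your proposal follows essentially the same route as the paper: a mode-dependent Lyapunov--Krasovskii functional on the reduced system $\bar\Sigma_K$, the weak infinitesimal generator together with Dynkin's formula to bound $\mathcal H_2$ and $\mathcal H_\infty$ by $V(x_0,0,\mathcal F_0)$, the congruence $Y_j=P_{k_{ij}}^{-1}$ with the change of variables $Z_j=K_jY_j$, and successive Schur complements to produce the blocks of~\eqref{combineLMI}; the paper merely packages the analysis step as two separate propositions (one for $H_2$, one for $H_\infty$) stated in the original $(P_k,Q_k)$ variables before carrying out the congruence.

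One small slip worth correcting before you write out details: in the paper's derivation the $(2,2)$ block $-L_{k_{ij}}$ comes from Schur-complementing the \emph{positive} contribution $Y_jQ_{k_{ij}}Y_j$ (the ``add'' part of the integral term in $V$), whereas the delayed-state quadratic $-(1-\dot\tau)Q_{k_{ij}}=-\hat Q_{k_{ij}}$ together with the cross term $P_{k_{ij}}B_iK_j$ is what generates the $(4,1)/(4,4)$ block (which is then relaxed from $-\hat Q_{k_{ij}}$ to $I_n$ after a further congruence); you have these two roles exchanged.
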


We remark that both the second and third LMIs in~\eqref{combineLMI} depend on $g_{ij}$ since the value of $g_{ij}$ affects $\kappa_{ij}$, which is embedded in the LMIs.

\subsection{Proof}
The proof of Theorem~\ref{thm:main:control} is presented in this subsection. The two propositions in the sequel are proposed for devising a mixed $H_2/H_{\infty}$ controller for the system~$\Sigma_K$, in which the first proposition provides the conditions of a $H_2$ controller with limited $H_2$ performance measure.

\begin{Proposition} \label{H_2 controller theorem}

Let $Q_k \in \mathbb{R}^{n \times n}$  be positive definite matrices. Assume that there exist positive definite matrices $P_k \in \mathbb{R}^{n \times n}$ such that
	\begin{equation} \label{varXi}
	\begin{aligned}
	\varXi_{k_{ij}} = 
	\centering{\begin{matrix}
		\begin{bmatrix}
		\eth_{k_{ij}} & \star   \\
		K_j^\top B_i^\top P_{k_{ij}} & -\hat Q_{k_{ij}}
		\end{bmatrix} 
		\end{matrix} }  < 0
	\end{aligned}
	\end{equation}	
	are satisfied for all $(i,j) \in \Theta \times \Theta$, where 
\begin{equation}
\begin{aligned}
\eth_{k_{ij}} = A_i^\top P_{k_{ij}} &+  P_{k_{ij}} A_i  + Q_{k_{ij}} \\ &+\sum_{k'=1}^{N^2} \kappa_{{k_{ij}}k'}P_{k'} + C^\top_i C_i 
\in \mathbb{R}^{n \times n}
\end{aligned}
\end{equation}
and $\hat Q_{k_{ij}} = (1-\tau^{+})Q_{k_{ij}}$. If $w(t) \equiv 0$, then 
	\begin{equation} \label{h_2_bound}
	\mathcal{H}_2 \leq  x^\top_0P_{\mathcal{F}_0}x_0 + \int_{-\tau}^0 \phi^\top(v) Q_{\mathcal{F}_0} \phi(v)dv.
	\end{equation}
\end{Proposition}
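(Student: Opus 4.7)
The plan is to construct a mode-dependent Lyapunov-Krasovskii functional and apply Dynkin's formula to the extended Markov process $s$ furnished by Proposition~\ref{reduction_MJLS}. Working with the equivalent form $\bar{\Sigma}_K$, I would define, for $k = \mathcal{F}(s(t))$,
\begin{equation*}
V(x_t, s(t)) = x^\top(t) P_k x(t) + \int_{t-\tau(t)}^{t} x^\top(v) Q_k x(v)\, dv.
\end{equation*}
Positivity of $P_k$ and $Q_k$ gives $V \ge 0$; at $t=0$, because $\tau(0) \in [0,\tau_0]$, $V(0, s_0) \leq x_0^\top P_{\mathcal{F}_0} x_0 + \int_{-\tau_0}^{0} \phi^\top(v) Q_{\mathcal{F}_0} \phi(v)\, dv$, which matches the right-hand side of~\eqref{h_2_bound}.

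The core calculation is the infinitesimal generator $\mathcal{L}V$ of the combined process $(x_t, s(t))$. With $w \equiv 0$, three contributions appear: the drift $2 x^\top P_k(\hat A_{s} x + \hat B_{s} \check K_{s} x(t-\tau(t)))$ from $\dot x$; the Markov-jump term $\sum_{k'} \kappa_{kk'} x^\top P_{k'} x$ from the discrete part of $s$; and the Leibniz-rule contribution $x^\top Q_k x - (1-\dot\tau(t)) x^\top(t-\tau(t)) Q_k x(t-\tau(t))$ from the integral. Setting $\eta(t) = \bigl[x^\top(t)\ x^\top(t-\tau(t))\bigr]^\top$ and invoking $\dot\tau \leq \tau^+$ together with $\hat Q_k = (1-\tau^+) Q_k$, these terms assemble so that
\begin{equation*}
\mathcal{L}V + z^\top z \leq \eta^\top \varXi_{k_{ij}} \eta,
\end{equation*}
and the hypothesis $\varXi_{k_{ij}}<0$ yields $\mathcal{L}V \leq -z^\top z$ pathwise.

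Then Dynkin's formula applied to $V$ over $[0,T]$ gives $E[V(T,s(T))] - V(0,s_0) = E[\int_0^T \mathcal{L}V\, dt] \leq -E[\int_0^T z^\top z\, dt]$, so
\begin{equation*}
E\!\left[\int_0^T z^\top(t) z(t)\, dt\right] \leq V(0, s_0) - E[V(T, s(T))] \leq V(0, s_0).
\end{equation*}
Taking $T \to \infty$ by monotone convergence on the nonnegative integrand delivers $\mathcal{H}_2 \leq V(0,s_0)$, which is the asserted bound~\eqref{h_2_bound}.

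The main obstacle I expect is the generator's action on the mode-dependent integral term. When $Q_k$ genuinely depends on $k$, the discrete part of $s$ produces an extra contribution $\sum_{k'}\kappa_{kk'}\!\int_{t-\tau(t)}^{t} x^\top(v) Q_{k'} x(v)\, dv$ to $\mathcal{L}V$ that is invisible in $\varXi_{k_{ij}}$. The cleanest remedy is to interpret $\{Q_k\}$ as a single matrix $Q$ (so the term vanishes since $\sum_{k'}\kappa_{kk'}=0$); otherwise an explicit dominance condition on the family $\{Q_{k'}\}$ must be imposed, and I would state this convention at the outset. The remaining technicalities—positivity of $V$, legitimacy of Dynkin's formula for the pure-jump $s$ coupled to the absolutely continuous $x$, and the monotone convergence step—follow standard Markov-jump-system arguments.
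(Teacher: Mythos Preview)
Your argument is essentially the paper's: the same Lyapunov--Krasovskii functional, the same generator computation, and Dynkin's formula to turn $\mathcal{L}V \le -z^\top z$ into the bound $\mathcal{H}_2 \le V(0,s_0)$. The only organizational difference is that the paper first eliminates $x(t-\tau)$ by completing the square (equivalently, Schur complement of $\varXi_{k_{ij}}$) to obtain $\mathcal{S}^x[V]\le x^\top(t)\widetilde\varXi_k x(t)$ and then inserts a lengthy stability detour before returning to the $H_2$ estimate, whereas you keep the augmented vector $\eta$ and go straight to the bound---which is cleaner and entirely sufficient for the stated conclusion. Your caveat about the jump contribution $\sum_{k'}\kappa_{kk'}\int x^\top Q_{k'}x$ when $Q_k$ is genuinely mode-dependent is well taken; the paper sidesteps this by citing~[29] rather than addressing it, so your proposed remedy (take $Q_k\equiv Q$, as is in fact done in the numerical example) is the honest fix.
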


\begin{proof}
	Consider a Lyapunov function 
	\begin{equation} \label{function V}
	\begin{aligned}
	V(x(t),t,k) &= x^\top(t)P_kx(t) + \int_{t-\tau}^t x^\top(v) Q_k x(v) d v, \\ k &= {\mathcal{F}(s(t))}.  
	\end{aligned}
	\end{equation}
	By~[27, 28], we have that the weak infinitesimal operator $\mathcal{S}^x[\cdot]$ of $\{x(t)\}_{t \geq 0}$ in $ \Sigma_K$ is 
	\begin{equation}
	\mathcal{S}^x[V] = \frac{\partial V}{  \partial t} + \dot{x}^\top(t) \frac{\partial V}{\partial x} \big|_{k} + \sum_{k'=1}^{N^2} \kappa_{kk'}V.
	\end{equation}
	We let 
	\begin{equation}
	\begin{aligned}
	\widetilde{\varXi}_{{k_{ij}}} = A_i^\top P_{k_{ij}} + P_{k_{ij}} A_i &+ Q_{k_{ij}} +\sum_{k'=1}^{N^2} \kappa_{{k_{ij}}k'}P_{k'} \\ &+P_{k_{ij}} B_i K_j \hat Q^{-1}_{k_{ij}} K_j^\top B_i^\top P_{k_{ij}}. 
	\end{aligned}
	\end{equation}
	Let
	\begin{equation}
	\begin{aligned}
	\bar{\varXi}_{{k_{ij}}} = \widetilde{\varXi}_{{k_{ij}}} + C^\top_i C_i
	\end{aligned}
	\end{equation}
    for each $k_{ij}$, we see that $\bar{\varXi}_{{k_{ij}}}<0$ is equivalent to~\eqref{varXi}.
	By the proof development of Theorem~3.1 of~[29], the inequality
	\begin{equation} \label{S[V] ineq}
	\mathcal{S}^x[V] \leq x^\top(t) \widetilde{\varXi}_{k} x(t)
	\end{equation}
	holds. Besides, the fact $C^\top_i C_i>0$ and the LMIs~\eqref{varXi} lead to \begin{equation} \label{tilde_ineq}
	\widetilde{\varXi}_{k} < 0
    \end{equation}
	for all $i,j \in \Theta$.
	
	Since $\psi \rVert x(t) \rVert^2 \geq \rVert x(t+\vartheta) \rVert^2$ for some $\psi \in \mathbb{R}_{+}$ and all $ -\tau \leq \vartheta \leq 0$, it holds that
	\begin{equation} \label{V(x)_ineq}
	 x^\top(t)P_kx(t) + \mu \rVert x \rVert^2 \geq V(x(t),t,k) 
	\end{equation}
	by~\eqref{function V}, where $\mu = \psi \tau \lambda_{\max}(Q_k)$. The equations~\eqref{S[V] ineq},~\eqref{tilde_ineq}, and~\eqref{V(x)_ineq} show that
	\begin{equation}
	\begin{aligned}
	\frac{\mathcal{S}^x[V]}{V(x(t),t,k)} & \leq -\min_{k \in \{1,\dotsc,N^2\}} \bigg \{\frac{\lambda_{\min}(-\widetilde{\varXi}_k)}{\lambda_{\max}(P_k) + \mu} \bigg \} \\ &= -\zeta < 0,
	\end{aligned}
	\end{equation} 
    which induces 
	\begin{equation} \label{inqu_s_zeta}
	\mathcal{S}^x[V] \leq -\zeta V(x(t),t,k).
	\end{equation}
	In accordance with the formula
	\begin{equation} \label{Dynkin's formula}
	\begin{aligned}
	E \bigg[\int_0^t  \mathcal{S}^x[&V(x(v),v,\mathcal{F}(s(v)))] dv \bigg] \\
	&=-V(x_0,0,\mathcal{F}_0)+E[V(x(t),t,k)],
	\end{aligned}
	\end{equation}
	we obtain
	\begin{equation} 
	-\zeta V(x(t),t,k) \geq  E \big [ \mathcal{S}^x[V] \big] = \frac{dE[V(x(t),t,k)]}{dt},
	\end{equation}
	which results in
	\begin{equation}\label{expectation V}
	 e^{-\zeta t} V(x_0,0,\mathcal{F}_0) \geq E[V(x(t),t,k)].
	\end{equation}
	Consider~\eqref{inqu_s_zeta},~\eqref{Dynkin's formula}, and \eqref{expectation V}, we see that  
	\[
	\begin{aligned}
	&E \bigg[ \int_0^T  V(x(v),t,\mathcal{F}(s(v)))dv \bigg] \\
	&\leq -\frac{1}{\zeta} \Bigg[ \sup_{T \in [0.\infty)} \Big[ E\big[V(x(T),T,\mathcal{F}(s(T))) \big] -V(x_0,0,\mathcal{F}_0) \Big] \Bigg] \\
    &= -\frac{e^{-\zeta T}-1}{\zeta}	V(x_0,0,\mathcal{F}_0) \\
    &< \frac{1}{\zeta} V(x_0,0,\mathcal{F}_0)
	\end{aligned} 
    \]
	should be established, where $e^{-\zeta T}-1 \in (-1,0]$.
	
	It is a direct result that the inequality $E [\int_{-\tau}^0 x^\top(t+\sigma)Q_k x(t+\sigma) d\sigma] \geq 0$ holds. Therefore, 
	\begin{equation} 
	\begin{aligned}
	&\lim_{T \to \infty} E \bigg[\int_0^{T} x^\top(t) x(t) dt \mid (x_0,s_0)\bigg] \\
	& \leq  \lim_{T \to \infty} E \bigg[ \int_0^T  V(x(t),t,\mathcal{F}(s(t)))dt \mid (x_0,s_0)  \bigg]      \\
	&\leq x^\top_0 \lambda_{\max} ({\Omega}_k) x_0 + \frac{\tau_0 \lambda_{\max}(Q_{\mathcal{F}_0}) \big[\sup \rVert x(t+ \vartheta) \rVert^2 \big]}{\zeta \lambda_{\max} (P_k)}\\
	&< \infty, 
	\end{aligned}
	\end{equation}
	where 
	\begin{equation} 
	\begin{aligned}
	\Omega_k 
	= \max_{k \in \{1,\dotsc,N^2\}} \bigg\{ \frac{ P_{\mathcal{F}_0} }{ \zeta  \lambda_{\max} (P_k) } \bigg\}. 
	\end{aligned}
	\end{equation}
	This illustrates that the system~$\Sigma_K$ is weakly delay-dependent stochastically stable. Now, let us show the $H_2$ performance measure is limited by a level. Consider~\eqref{S[V] ineq} and~\eqref{Dynkin's formula}, we obtain
	\begin{equation}
	\begin{aligned}
	&E \big[V \big(x(\varGamma),\varGamma, \mathcal{F}(s(\varGamma)) \big) \big]-V(x_0,0,\mathcal{F}_0) \\ &\leq E \bigg[\int_0^{\varGamma} x^\top(v) \Big(\bar \varXi_{\mathcal{F}(s(v))} - C_{r(v)}^\top C_{r(v)} \Big) x(v) dv \bigg].
	\end{aligned}
	\end{equation}
	Then, it can be shown that
	\begin{equation} \label{h_2_theorem}
	\begin{aligned}
	\mathcal{H}_2 &= E \bigg[ \int_0^{\infty} z^\top(v)z(v)dv \bigg] \\ 
	&= E \bigg[ \int_0^{\infty} \big[x^\top(v) C_{r(v)}^\top  C_{r(v)} x(v) \big] dv \bigg] \\ 
	&\leq V(x_0,0,\mathcal{F}_0) \\
	&=  x^\top_0P_{\mathcal{F}_0}x_0 + \int_{-\tau}^0 \phi^\top(v) Q_{\mathcal{F}_0} \phi(v)dv
	\end{aligned}
	\end{equation}
	holds if~\eqref{varXi} holds true with $w(t) \equiv 0$ for all $t \geq 0$.
\end{proof} 

The following proposition is for the LMIs used to design a $H_{\infty}$ controller 
with the guarantee of $\gamma$ attenuation property and weak delay-dependent stochastic stability of the system~$\Sigma_K$, as well as a limitation of the defined $H_{\infty}$ performance measure.

\begin{Proposition} \label{H_f controller theorem}
Let $\gamma > 0$ be a given constant and $Q_k \in \mathbb{R}^{n \times n}$ be positive definite matrices. Assume that 
there exist positive definite matrices $P_k \in \mathbb{R}^{n \times n}$ such that the following LMIs 
	\begin{equation} \label{LMI_h-infty}
	\begin{aligned}
	\hat \varXi_{k_{ij}} 
	= 
	    \centering{\begin{matrix}
		\begin{bmatrix}
		\Re_{k_{ij}} &\star  &\star \\
		K_j^\top B_i^\top P_{k_{ij}} & -\hat Q_{k_{ij}} & \star \\
		E_i^\top P_{k_{ij}} + \varPhi^\top_i J_i  & 0 & -\gamma^2 I_q + \varPhi^\top_i \varPhi_i 
		\end{bmatrix} 
		\end{matrix}} 
	    < 0
	\end{aligned}
	\end{equation}
	are satisfied for all $(i,j) \in \Theta \times \Theta$, where the matrix $\Re_{k_{ij}}$ is defined by
\begin{equation}
\begin{aligned}
\Re_{k_{ij}} =  A_i^\top P_{k_{ij}} &+  P_{k_{ij}} A_i  + Q_{k_{ij}} \\&+\sum_{k'=1}^{N^2} \kappa_{{k_{ij}}k'} P_{k'} + J^\top_i J_i \in \mathbb{R}^{n \times n}.
\end{aligned}
\end{equation}
Then, the system~$ \Sigma_K$ is weakly delay-dependent stochastically stable and 
	\begin{equation} \label{h_f_boudn}
	\mathcal{H}_{\infty} \leq   x^\top_0 P_{\mathcal{F}_0}x_0 + \int_{-\tau}^0 \phi^\top(v) Q_{\mathcal{F}_0} \phi(v)dv 
	\end{equation}
is satisfied for all $w \in \mathcal{L}_2([0,\infty),\mathbb{R}^q)$. 
\end{Proposition}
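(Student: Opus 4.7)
The plan is to parallel the proof of Proposition~1 with the same Lyapunov functional $V(x(t),t,k) = x^\top(t) P_k x(t) + \int_{t-\tau}^{t} x^\top(v) Q_k x(v) \, dv$, now evaluated along trajectories of the disturbance-driven system $\bar \Sigma_K$. First, I would compute the weak infinitesimal operator $\mathcal{S}^x[V]$ exactly as in Proposition~1 but retaining the extra drift term $\hat E_{s(t)} w(t)$, yielding a quadratic form in the augmented vector $[x(t)^\top,\ x(t-\tau(t))^\top,\ w(t)^\top]^\top$. Adding $y^\top(t)y(t) - \gamma^2 w^\top(t)w(t)$ and expanding via $y = J_i x + \varPhi_i w$ contributes the extra entries $J_i^\top J_i$ into the $(1,1)$-block, $\varPhi_i^\top\varPhi_i - \gamma^2 I_q$ into the $(3,3)$-block, and $\varPhi_i^\top J_i$ into the $(3,1)$-block; the drift term $P_{k_{ij}}\hat E_{s(t)}$ already accounts for the remaining $E_i^\top P_{k_{ij}}$ contribution so that the $(3,1)$-block reads exactly $E_i^\top P_{k_{ij}} + \varPhi_i^\top J_i$ as in~\eqref{LMI_h-infty}. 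Taking the Schur complement with respect to the negative-definite $(2,2)$-block $-\hat Q_{k_{ij}}$ (which eliminates $x(t-\tau(t))$ in the same manner as in the passage from~\eqref{varXi} to $\widetilde\varXi_{k_{ij}}$) then shows that $\hat\varXi_{k_{ij}} < 0$ is equivalent to the dissipation inequality $\mathcal{S}^x[V] + y^\top y - \gamma^2 w^\top w \leq 0$ along trajectories of $\bar\Sigma_K$.

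With this inequality established, I would deduce weak delay-dependent stochastic stability by specializing to $w \equiv 0$: then $\mathcal{S}^x[V] \leq -y^\top(t)y(t) \leq 0$, and the block $\Re_{k_{ij}}$ majorizes $\eth_{k_{ij}}$ with the positive semi-definite term $J_i^\top J_i$ playing the role of $C_i^\top C_i$, so exactly the $\lambda_{\min}/\lambda_{\max}$ argument following~\eqref{tilde_ineq} delivers $\mathcal{S}^x[V] \leq -\zeta V$ for some $\zeta > 0$. Dynkin's formula~\eqref{Dynkin's formula} together with the elementary bound $V \geq x^\top P_k x$ then yields $E\bigl[\int_0^\infty \rVert x(t) \rVert^2 \, dt\bigr] < \infty$ verbatim as in Proposition~1.

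For the $H_\infty$ bound, I would apply Dynkin's formula to $V$ on $[0,\varGamma]$ for an arbitrary $w \in \mathcal{L}_2([0,\infty),\mathbb{R}^q)$, obtaining $E[V(x(\varGamma),\varGamma,\mathcal{F}(s(\varGamma)))] - V(x_0,0,\mathcal{F}_0) \leq E\bigl[\int_0^{\varGamma} (-y^\top y + \gamma^2 w^\top w)\, dt\bigr]$. Since $V \geq 0$ almost surely, rearranging and letting $\varGamma \to \infty$ yields $\mathcal{H}_\infty = E\bigl[\int_0^\infty (y^\top y - \gamma^2 w^\top w)\, dt\bigr] \leq V(x_0,0,\mathcal{F}_0) = x_0^\top P_{\mathcal{F}_0} x_0 + \int_{-\tau}^0 \phi^\top(v) Q_{\mathcal{F}_0} \phi(v)\, dv$, which is precisely~\eqref{h_f_boudn}.

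The hardest part will be the bookkeeping in the Schur complement step: the two separate sources of $x$-$w$ coupling, namely $2 x^\top P_{k_{ij}} \hat E_{s(t)} w$ from $\mathcal{S}^x[V]$ and $2 x^\top J_i^\top \varPhi_i w$ from $y^\top y$, must be combined so that the $(3,1)$-block is exactly $E_i^\top P_{k_{ij}} + \varPhi_i^\top J_i$; one must also verify that there is no direct $x(t-\tau(t))$-$w(t)$ coupling, which matches the zero $(3,2)$-block in~\eqref{LMI_h-infty}. A secondary item is to ensure that $E[V(x(\varGamma),\cdot)]$ remains well-behaved as $\varGamma \to \infty$, which follows from the stability argument in the unforced case and from the $\mathcal{L}_2$ hypothesis on $w$ in the disturbed case.
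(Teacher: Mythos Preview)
Your proposal is correct and follows essentially the same route as the paper: the same Lyapunov functional $V$, the same augmented quadratic form in $[x(t)^\top,\ x(t-\tau)^\top,\ w(t)^\top]^\top$, stability via the $w\equiv 0$ specialization, and the $H_\infty$ bound via Dynkin's formula together with $E[V]\geq 0$. The only cosmetic difference is that the paper keeps the full $3\times 3$ matrix $\hat\varXi_{k_{ij}}$ and works directly with the augmented vector $\mathcal{X}(t)$, whereas you phrase the elimination of $x(t-\tau)$ as a Schur complement on the $(2,2)$-block; these are equivalent manipulations and lead to the same dissipation inequality.
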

\begin{proof}
	The proof of weak delay-dependent stochastic stability with $w(t) \equiv 0$ has been shown in the proof development of Proposition~\ref{H_2 controller theorem}.  Then let us prove that~$\Sigma_K$ has the disturbance attenuation level $\gamma$. 
	
	We have already considered the Lyapunov function $V$ in the proof of Proposition~\ref{H_2 controller theorem}. Then, let \[\mathcal{S}_x[V] = \mathcal{S}^x[V] + w^\top(t) E^\top_i P_k x(t) + x^\top(t) P_k E_i w(t) .\] 
	Applying Dynkin's formula~\eqref{Dynkin's formula}, we can show that 
	\begin{equation} \label{h_infy_ineq}
	\mathcal{X}^\top(t) \hat \varXi_k \mathcal{X}(t) \geq - \gamma^2 w^\top(t) w(t) +y^\top(t)y(t) + \mathcal{S}_x[V] 
	\end{equation}
	by $\Sigma_K$ and \eqref{S[V] ineq}, where 
	\begin{equation}
	\mathcal{X}(t) = \begin{bmatrix}
		x(t) \\
		x(t-\tau)  \\
		w(t)
	\end{bmatrix}.
	\end{equation}
	The inequality~\eqref{h_infy_ineq} can be transformed into 
	\begin{equation} \label{H_T}
	\begin{aligned}
	\mathcal{H}_T &\leq  V(x_0,0,\mathcal{F}_0)-E \big \{V(x(T),T,\mathcal{F}(s(T)))  \big \} \\
	&+E \bigg\{\int_0^T \mathcal{X}^\top(v) \hat \varXi_k \mathcal{X}(v) dv \bigg\}
	\end{aligned}
	\end{equation} 
	Combining the LMIs~\eqref{LMI_h-infty} and $0 \leq E  \{V(x(T),T,\mathcal{F}(s(T))) \}$, inequality~\eqref{H_T} implies that
	\begin{equation}
	\mathcal{H}_{\infty} = \mathcal{H}_{T \to \infty} \leq V(x_0,0,\mathcal{F}_0).  
	\end{equation}
\end{proof}

We are now ready to give the proof of Theorem~\ref{thm:main:control}.

\begin{proof}[Proof of Theorem~\ref{thm:main:control}]
	The LMIs~\eqref{varXi} in Proposition~\ref{H_2 controller theorem} are equivalent to
	\begin{equation}\label{equivalent_39}
	\begin{aligned} 
	\bar \varXi_{k_{ij}} < 0
	\end{aligned}
	\end{equation}	
	for all $k_{ij}$ as shown in the proof lines of Proposition~\ref{H_2 controller theorem}. Using $Y_j =P^{-1}_j $ and $Y_j$ to pre-multiply and to post-multiply  $\bar \varXi_{k_{ij}}$, respectively, we obtain the following inequalities  
	\begin{equation} \label{Y_kmultiply}
	\begin{aligned}
	&Y_j A_i^\top  + A_i Y_j + (B_i Z_j Y^{-1}_j) \hat Q^{-1}_{k_{ij}} (Y^{-\top}_j Z^\top_j B^\top_i) \\ &+Y_j L^{-1}_{k_{ij}} Y_j + Y_j C^\top_i C_i Y_j + Y_j \bigg( \sum_{k'=1}^{N^2}  \kappa_{{k_{ij}}k'}P_{k'} \bigg) Y_j   < 0,
	\end{aligned}
	\end{equation}	
	where $L_{k_{ij}} = Q_{k_{ij}}^{-1}$ for each ${k_{ij}}\in \{1,\dotsc,N^2\}$. The inequalities~\eqref{Y_kmultiply} are equal to 
	\begin{equation} \label{h_2 ineq}
	\begin{aligned}
	\centering{ \begin{bmatrix}
		\mathcal{U}_{k_{ij}} & \star & \star  & \star   & \star  \\
		Y_j & -L_{k_{ij}} & \star  & \star  & \star  \\
		C_iY_j & 0 & -I_l & \star  & \star  \\
		Y_j^{-\top} Z_j^\top B_i^\top   & 0 & 0 & -\hat Q_{k_{ij}} & \star  \\
		\aleph_{k_{ij}}^\top & 0 & 0 & 0 & -\mathcal{Y}_{k_{ij}} \\
		\end{bmatrix}}  <0,  
	\end{aligned}
	\end{equation}
	from which we obtain
    \begin{flalign}
    & \begin{bmatrix} 
	\mathcal{U}_{k_{ij}} & \star  & \star  & \star  & \star  \\
	Y_j & -L_{k_{ij}} & \star  & \star  & \star  \\
	C_i Y_j & 0 & -I_l & \star  & \star  \\
	Z_j^\top B_i^\top & 0 & 0 & I_n & \star  \\
	\aleph_{k_{ij}}^\top & 0 & 0 & 0 & -\mathcal{Y}_{k_{ij}} \\
	\end{bmatrix} \\ 
	& + \begin{bmatrix}
	0 & \star  & \star  \\
	0 & (- Y_j^\top \hat Q_{k_{ij}} Y_j - I_n) & \star  \\
	0 &  0 & 0  \\
	\end{bmatrix} < 0, 
	\end{flalign}
 is the condition ensuring that the controller~\eqref{controller} is a $H_2$ controller. 
	Since $\hat Q_{k_{ij}} = (1-\tau^{+}) Q_{k_{ij}} \geq  0$, we have   
	\begin{equation} 
	\begin{aligned}
	\centering{\begin{matrix} 
		\begin{bmatrix}
		0 &  \star  & \star  \\
		0 & (- Y_j^\top \hat Q_{k_{ij}} Y_j - I) & \star  \\
		0 &  0 & 0  \\
		\end{bmatrix} 
		\end{matrix}}  < 0.  
	\end{aligned}
	\end{equation}
	Therefore, if the second LMIs of~\eqref{combineLMI} are satisfied for all $(i,j) \in \Theta \times \Theta$, then~\eqref{controller} is a $H_2$ controller of the system~$\Sigma_K$. Using the same argument in the above part of this proof, it can be proven that the third LMIs of~\eqref{combineLMI} are proposed for a $H_{\infty}$ controller design as presented in Proposition~\ref{H_f controller theorem}.
	 
	We then show the first condition of~\eqref{combineLMI}. Let $\lambda \in \mathbb{R}$ satisfy 
	\begin{equation}
	\begin{aligned}
	\phi^\top(0) P_{\mathcal{F}_0} \phi(0) &\leq \max_{k \in \{1,\dotsc,N^2\}} \bigg \{\phi^\top(0) P_k \phi(0) \bigg \} \leq \lambda, 
	\end{aligned}
	\end{equation}
	and $Y^{-1}_j = P_j$ for each $j \in \Theta$. Thus,
	\begin{equation}
	-\lambda + \phi^\top(0) Y^{-1}_j \phi(0) \leq 0, \phi(0) = x_0. 
	\end{equation}
    Moreover, 
	\begin{equation}
	\begin{aligned}
	\sup_{\mathcal{F}_0 \in \Theta \times \Theta} &\int_{-\tau}^0 x^\top(v) Q_{\mathcal{F}_0} x(v)dv \\ 
	&= \lambda_{\max}(Q_{\mathcal{F}_0}) \bigg[ \int_{-\tau}^0   x^\top(v) x(v)  dv  \bigg]
	\\ &= X^\top \lambda_{\max}(Q_{\mathcal{F}_0}) X.
	\end{aligned} 
	\end{equation}
	Let $\Lambda$ be a real number  satisfying
	\begin{equation}
	0< X^\top  \lambda_{\max}(L^{-1}_k) X  \leq  \Lambda,
	\end{equation}
    then
	\begin{equation}
	-\Lambda + X^\top \lambda_{\max}(L^{-1}_k) X \leq 0. 
	\end{equation}
	Therefore,
	\begin{equation}
	 x^\top_0P_{\mathcal{F}_0}x_0 + \int_{-\tau}^0 x^\top(s) Q_{\mathcal{F}_0}x(s)ds  \leq \lambda + \Lambda,
	\end{equation}
by which we obtain 
	\begin{equation}
	\mathcal{H}_2 \leq  \lambda + \Lambda, \mbox{and} \  \mathcal{H}_{\infty} \leq  \lambda + \Lambda 
	\end{equation}
    considering~\eqref{h_2_bound} and~\eqref{h_f_boudn}. If the real numbers $\lambda$ and $\Lambda$ satisfy the condition
   \[
\lambda + \Lambda \leq \min\{f_2,f_{\infty}\},
\]
then the properties in~\eqref{h_2 and h_inf measure} are satisfied. 
\end{proof}

\begin{figure}[tb] 
	\centering
	\includegraphics[width=70mm]{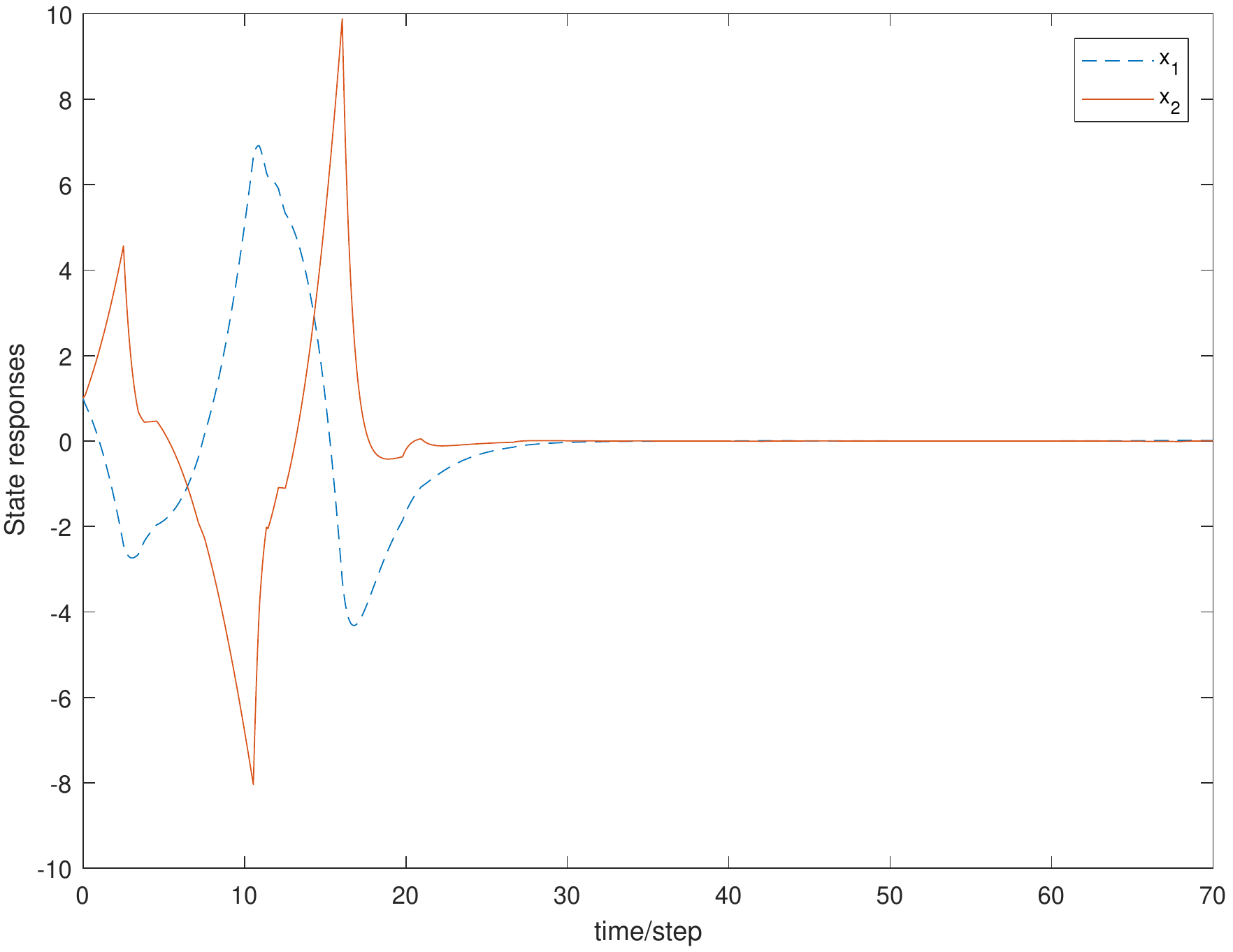}
	\caption{State trajectories \label{theorem4_x}}
\end{figure}
\begin{figure}[tb] 
	\centering
	\includegraphics[width=70mm]{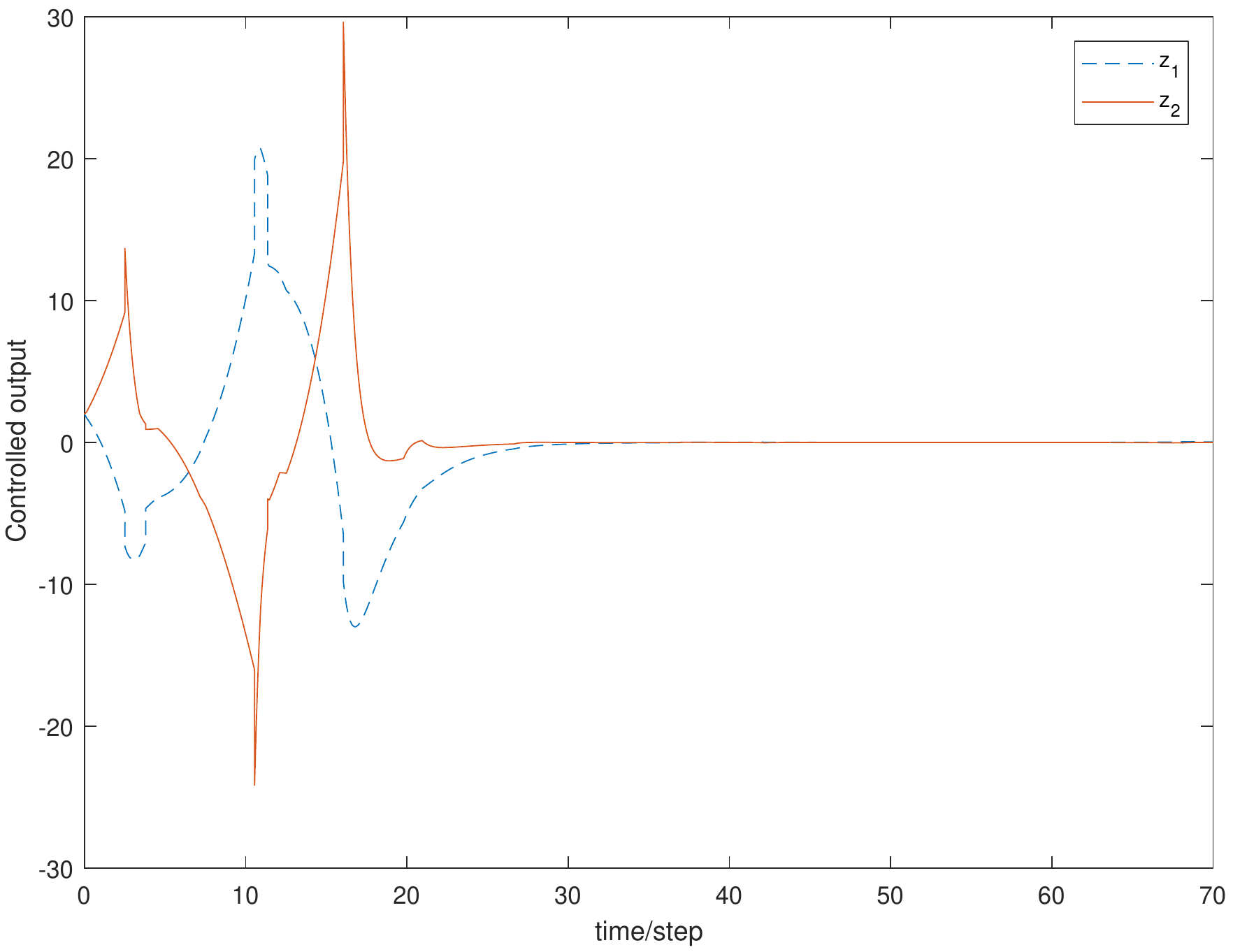}
	\caption{Controlled output $z(t)$ \label{theorem4_z}}
\end{figure}
\begin{figure}[tb] 
	\centering
	\includegraphics[width=70mm]{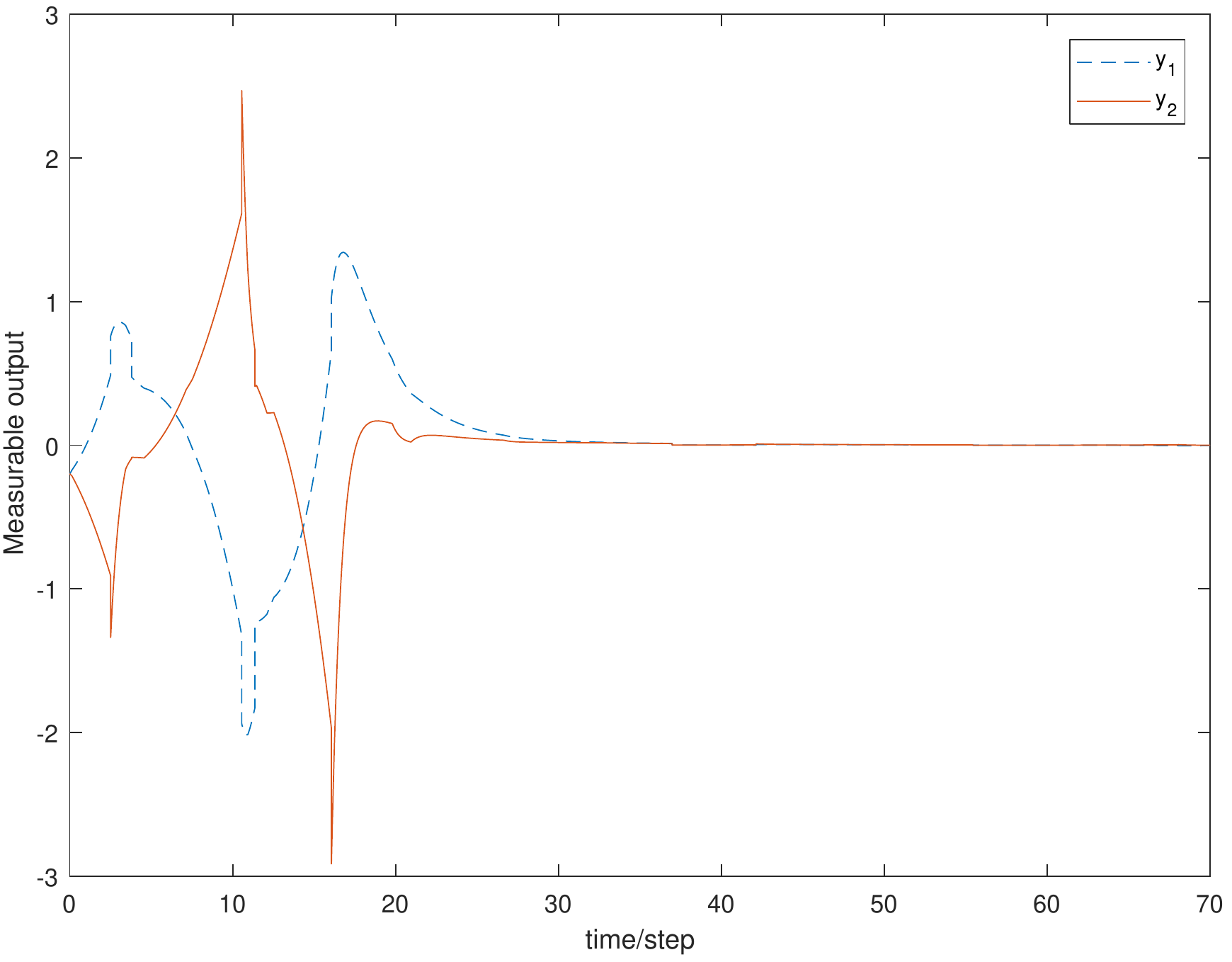}
	\caption{Measurable output $y(t)$ \label{theorem4_y}}
\end{figure}  
\begin{figure}[tb] 
	\centering
	\includegraphics[width=70mm]{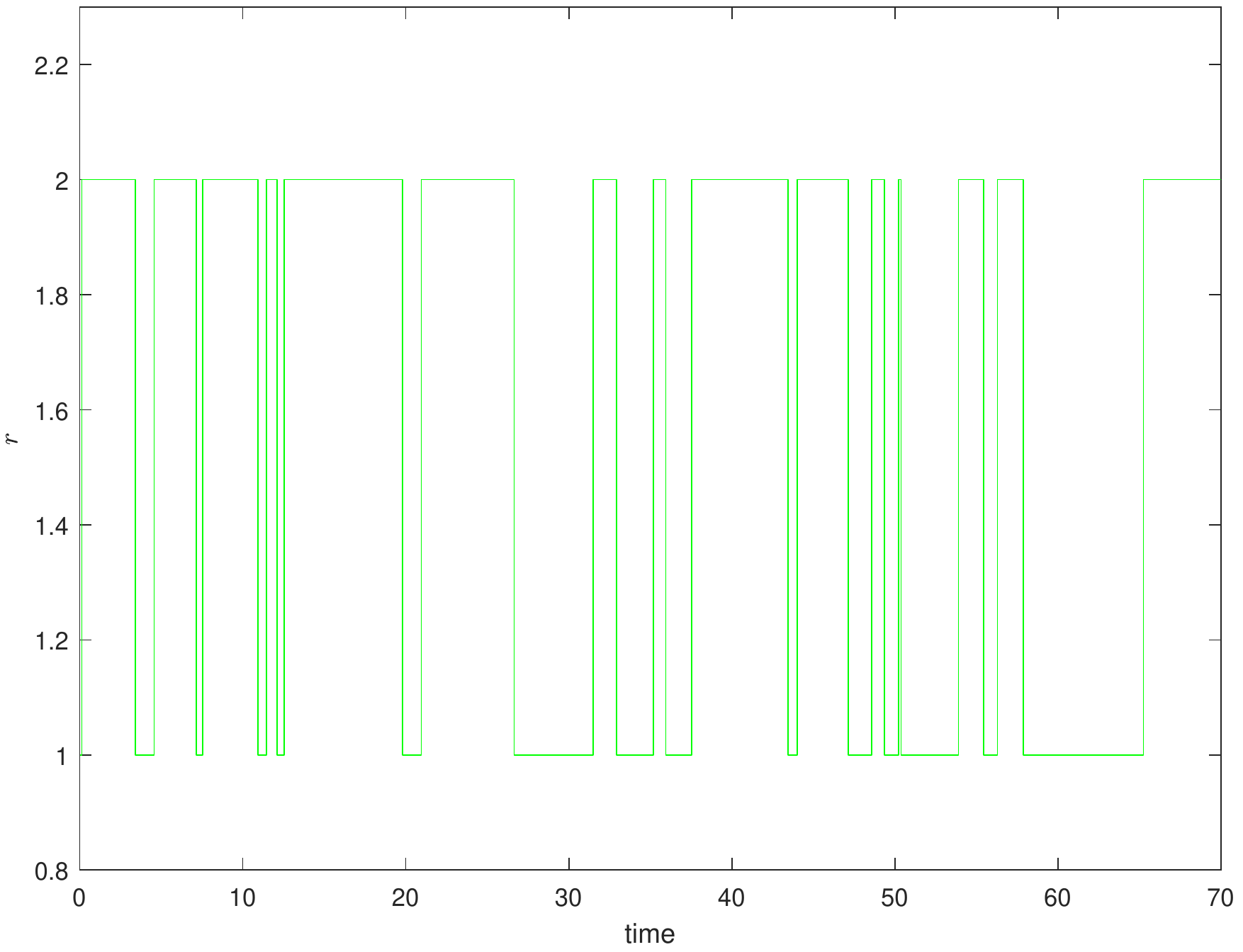}
	\caption{Mode signal $r(t)$ \label{r}}
\end{figure}
\begin{figure}[tb] 
	\centering
	\includegraphics[width=70mm]{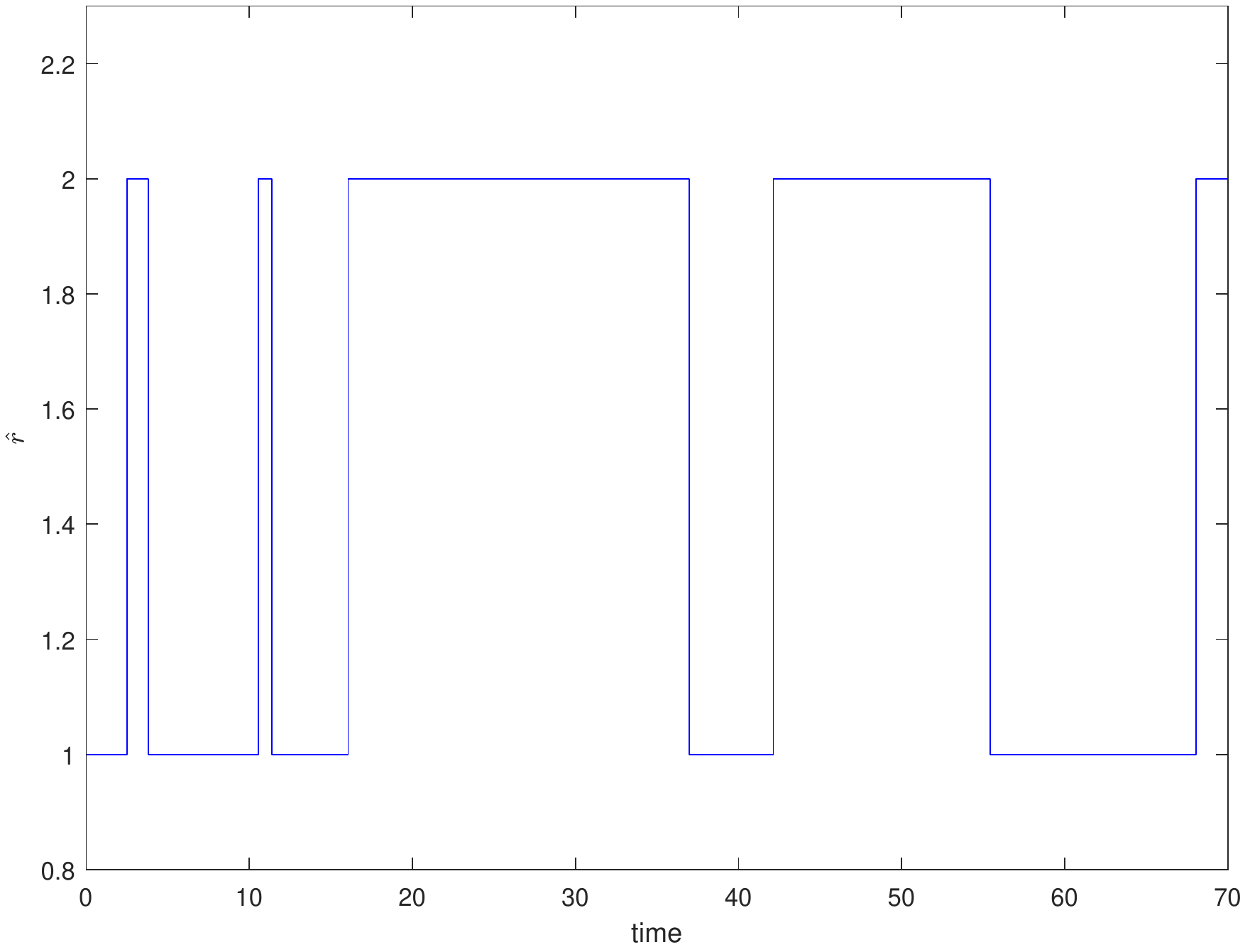}
	\caption{Mode observation $\tilde r(t)$ \label{hat_r}}
\end{figure}

\section{Numerical example} \label{numerical_example}

In this section, an example of mixed $H_2/H_{\infty}$ control is provided.
We let $\Theta = \{1,2\}$ in this example. 
We use $h_{1,2}$ to represent the observation delay of $\tilde r$ from the state 1 to the state 2, and $h_{2,1}$ to indicate the observation delay from the state 2 to the state 1. We assume that both $h_{1,2}$ and $h_{2,1}$ follow the exponential distribution with rate value $3$. Let the infinitesimal generator of  $r$ be  
\begin{equation}
\centering {\begin{matrix}
	\begin{bmatrix}
	-5&5   \\
	3&-3
	\end{bmatrix},
	\end{matrix}}
\end{equation}
so that 
\begin{equation}
\tilde{\mathcal{S}}= 
\begin{bmatrix}
-5&0&5&0   \\
3&-8&0&5   \\
3&0&-6&3    \\ 
0&3&0&-3
\end{bmatrix}.
\end{equation}
The system matrices are given as follows: 
\[
\centering{\begin{matrix}
	A_1 = \begin{bmatrix}
	0 &-0.45   \\
	0.9&0.9
	\end{bmatrix}, 	A_2 = \begin{bmatrix}
	0&-0.29   \\
	0.9&-1.26
	\end{bmatrix}, 	B_1 = \begin{bmatrix}
	0.5  \\
	1.1
	\end{bmatrix},
	\end{matrix}}  
\]
\[
\centering{\begin{matrix}
	B_2 = \begin{bmatrix}
		0.6  \\
		1.4
	\end{bmatrix}, E_1 = \begin{bmatrix}
		0&-0.01 \\
		-0.01 & -0.03
	\end{bmatrix},  E_2 = \begin{bmatrix}
		-0.01 & -0.03  \\
		-0.06 & -0.1
	\end{bmatrix},
\end{matrix}}
\]
\[
\centering{\begin{matrix}
	C_1 = \begin{bmatrix}
	2&0  \\
	0&2
	\end{bmatrix}, C_2 = \begin{bmatrix}
	3&0   \\
	0&3
	\end{bmatrix}, \varPsi_1 = \begin{bmatrix}
	0.4&0.5   \\
	-0.3&1.2
	\end{bmatrix},
	\end{matrix}}
\]
\[
\centering{\begin{matrix}
\varPsi_2 = \begin{bmatrix}
	-0.2&-0.4   \\
	0&-0.6
\end{bmatrix}, J_1 = \begin{bmatrix}
	-0.2 & 0 \\ 
	0 & -0.2
\end{bmatrix},
	\end{matrix}}
\]
\[
J_2 = \begin{bmatrix}
-0.3 & 0 \\ 
0 & -0.3
\end{bmatrix},
\varPhi_1  = \begin{bmatrix}
0.1&0   \\
0&0.1
\end{bmatrix}, \varPhi_2 = \begin{bmatrix}
1&0   \\
0&1
\end{bmatrix}.
\]
Regarding the mixed $H_2/H_{\infty}$ control, we let \[\displaystyle \tau^{+} = 0.5, \gamma = 1, f_2 = 15, f_{\infty} = 17,
\] 
\[Q_1 = Q_2 = Q_3 = Q_4 = I_{2},\] \[w(t) = [0.5e^{-0.1t} \sin(0.01 \pi t) \quad  0.5e^{-0.1t} \sin(0.01 \pi t)]^\top,\]
so that $\hat Q_k = (1-\tau^{+})Q_k = 0.5 I_2$ for all $k \in \{1,\dotsc,4\}$. Also, we set the initial state $\phi(0) = [0 \ \ 1]^\top$ and $X =  2$. The simulation result shows that there exist 
\[
\centering{\begin{matrix}
	Y_1 = \begin{bmatrix}
	0.1109 &	-0.0541 \\
	-0.0541 &  0.2465
	\end{bmatrix}, Y_2 = \begin{bmatrix}
	0.0702	& -0.0433 \\
	-0.0433 & 0.1729
	\end{bmatrix}, 
	\end{matrix}} 
\]
\[
Z_1 = [-0.0603  \ \ -0.0603]^\top, Z_2 = [-0.0209  \ \ -0.0209]^\top, 
\]
\[
\lambda = 7.1444, \Lambda = 4 
\] 
such that the LMIs of~\eqref{combineLMI} are established. Thus, we have
\begin{equation}
\centering{\begin{matrix}
	K_1 = \begin{bmatrix}
	-0.7423 & -0.4074
	\end{bmatrix}, K_2 = \begin{bmatrix}
	-0.4397 &  -0.2309
	\end{bmatrix}.
	\end{matrix}}
\end{equation}
Finally, we present the curve of state trajectories, controlled output, measurable output, mode signal, and mode observation of the system~$\Sigma_K$ versus $t$ in Fig~\ref{theorem4_x}, Fig~\ref{theorem4_z}, Fig~\ref{theorem4_y}, Fig~\ref{r}, and Fig~\ref{hat_r}, respectively.

\section{Conclusion}
In this work, for the continuous-time Markov jump linear systems with state and mode-observation delays, we have proposed a generic scheme for studying and devising state feedback control methods. Specifically, the delay of mode-observation is hypothesized to follow an exponential distribution. We have investigated that it is possible to remodel the closed-loop system as a Markov jump linear system with state delay in a standard form. On the basis of the remodeling, we devise a mixed $H_2/H_{\infty}$ controller by using an LMI framework. We have also examined the effectiveness of our proposed result by an example. Our further direction of research will contain reformulation of the  LMIs  to better  take the
features of  the considered  problems into account. It is also interesting to consider using other forms of Lyapunov function for studying the problems, e.g., a Lyapunov function with a three-part form. The third potential direction is an analysis of sliding mode control of the system in this paper.

\section*{References}
\begin{enumerate}	
	\item[{[1]}] Farias, D., Geromel, J., Do Val, J., Costa, O.:'Output feedback
  control of Markov jump linear systems in continuous-time', IEEE
  Transactions on Automatic Control, 2000, 45, (2),  pp. 944--949\vspace*{6pt}
   
   \item[{[2]}] Mahmoud, S., Shi, P.:'Robust stability, stabilization and $H_{\infty}$ control
  of time-delay systems with Markovian jump parameters', International Journal of Robust and Nonlinear Control, 2003, 784, (8), pp. 755--784\vspace*{6pt}
	
	\item[{[3]}] Xu, S., Lam, J., Mao, X.:'Delay-dependent $H_{\infty}$ control and filtering for
  uncertain Markovian jump systems with time-varying delays', IEEE
  Transactions on Circuits and Systems I: Regular Papers, 2007, 54, (9), pp.
  2070--2077\vspace*{6pt}
  
  \item[{[4]}] Zhao, X., Zeng, Q.:'New robust delay-dependent stability and $H_{\infty}$ analysis for uncertain Markovian jump systems with time-varying delays', Journal of the Franklin Institute, 2010, 347, (5), pp. 863--874\vspace*{6pt}

  \item[{[5]}] Mhaskar, P., El-Farra, N., Christofides, P.:'Robust predictive
  control of switched systems: Satisfying uncertain schedules subject to state
  and control constraints', International Journal of Adaptive Control
  and Signal Processing, 2008, 22, (2), pp. 161--179\vspace*{6pt}
  
  \item[{[6]}] Shi, P., Boukas, E., Liu, Z.:'Delay-dependent stability and output feedback
  stabilisation of Markov jump system with time-delay', IEE Proceedings
  - Control Theory and Applications, 2002, 149, (5), pp. 379--386\vspace*{6pt}
  
   \item[{[7]}] Cao, Y., Lam, J.:'Stochastic stabilizability and $H_{\infty}$ control for
  discrete-time jump linear systems with time delay', Journal of the
  Franklin Institute, 1999, 336, (8), pp. 1263--1281\vspace*{6pt} 
  
   \item[{[8]}] Cao, Y., Lam, J.:'Robust $H_{\infty}$ control of uncertain Markovian jump systems with time-delay', IEEE Transactions on Automatic Control, 2000, 45, (1), pp. 77--83\vspace*{6pt}
  
  \item[{[9]}] Chen, W., Guan, Z., Yu, P.:'Delay-dependent stability and $H_{\infty}$ control of
  uncertain discrete-time Markovian jump systems with mode-dependent time
  delays', Systems {\&} Control Letters, 2004, 52, (5), pp. 361--376\vspace*{6pt}
  
  \item[{[10]}] Xiong, J., Lam, J.:'Stabilization of discrete-time Markovian jump linear
  systems via time-delayed controllers', Automatica, 2006, 42, (5), pp. 747--753\vspace*{6pt}
  
  \item[{[11]}] Cetinkaya, A., Hayakawa, T.:'Discrete-time switched stochastic control
  systems with randomly observed operation mode', 52nd IEEE
  Conference on Decision and Control, Florence, Italy, Dec 2013, pp. 85--90.  \vspace*{6pt} 
  
  \item[{[12]}] Cetinkaya, A., Hayakawa, T.:'Stabilizing discrete-time switched linear stochastic systems using
  periodically available imprecise mode information', 2013 American
  Control Conference, Washington, USA, Jun 2013, pp. 3266--3271. \vspace*{6pt}
  
  \item[{[13]}] Cetinkaya, A., Hayakawa, T.:'Sampled-mode-dependent time-varying control strategy for stabilizing discrete-time switched stochastic systems', 2014 American Control Conference, Portland, USA, Jun 2014, pp.
  3966--3971. \vspace*{6pt}
  
   \item[{[14]}] Lou, X., Cui, B.:'Delay-dependent stochastic stability of delayed Hopfield neural networks with Markovian jump parameters', Journal of Mathematical Analysis and Applications, 2007, 328, (1), pp. 316--326. \vspace*{6pt}
   
   \item[{[15]}]  Chen, W., Zheng, W., Shen, Y.:'Delay-dependent stochastic stability and $H_{\infty}$-control of uncertain neutral stochastic systems With time delay', IEEE Transactions on Automatic Control, 2009, 54, (7), pp. 1660--1667. \vspace*{6pt}
   
   \item[{[16]}] Sakthivel, R., Harshavarthini, S., Kavikumar, R., Ma, Y.:'Robust tracking control for fuzzy Markovian jump systems with time-varying delay and disturbances', IEEE Access, 2018, 6, pp. 66861--66869.  \vspace*{6pt}
  
  \item[{[17]}] Boukas, E., Liu, Z., Shi, P.:'Delay-dependent stability and output feedback stabilisation of Markov jump system with time-delay', IEE Proceedings - Control Theory and Applications, 2002, 149, (5), pp. 379--386.  \vspace*{6pt}
  
  \item[{[18]}] Hien, L., Trinh, H.:'Delay-dependent stability and stabilisation of two-dimensional positive Markov jump systems with delays', IET Control Theory and Applications, 2017, 11, (10), pp. 1603--1610.  \vspace*{6pt}
  
  \item[{[19]}] Sakthivela, R., Sakthivel, R., Nithyaa, V., Selvaraj, P., Kwon, M.:'Fuzzy sliding mode control design of Markovian jump systems with time-varying delay', Journal of the Franklin Institute, 2018, 335, (14), pp. 6353--6370. \vspace*{6pt}
  
  \item[{[20]}] Park, B., Kwon, N., Park, P.:'Stabilization of Markovian jump systems with incomplete knowledge of transition probabilities and input quantization', Journal of the Franklin Institute, 2015, 352, (10), pp. 4354--4365.  \vspace*{6pt}
   
  \item[{[21]}] Xie, X., Lam, J., Fan, C.:'Robust time-weighted guaranteed cost control of uncertain periodic piecewise linear systems', Information Sciences, 2018, 460, pp. 238--253.  \vspace*{6pt}
  
  \item[{[22]}] Chen, B., Liu, P.:'Delay-dependent $H_2/H_{\infty}$ control for a class of switched TS fuzzy systems with time-delay',  IEEE Transactions on Fuzzy Systems, 2005, 13, (4), 544--556.  \vspace*{6pt}
  
  \item[{[23]}]  Aliyu, M., Boukas, E.:'Mixed $H_2/H_{\infty}$ stochastic control problem',  IFAC Proceedings Volumes, 1999, 32, (2), pp. 4929--4934. \vspace*{0pt}

  \item[{[24]}] Boukas, E.:'$H_{\infty}$ control of discrete-time Markov jump systems with bounded transition probabilities', Optimal Control Applications and Methods, 2009, 30, (5), pp. 477--494.  \vspace*{6pt}

  \item[{[25]}] Luan, X., Zhao, S., Liu, F.:'$H_{\infty}$ control for discrete-time Markov jump systems with uncertain transition probabilities', IEEE Transactions on Automatic Control, 2012, 58, (6), pp. 1566--1572.  \vspace*{6pt}

  \item[{[26]}] Ma, S., Zhang, C.:'$H_{\infty}$ control for discrete-time singular Markov jump systems subject to actuator saturation', Journal of the Franklin Institute, 2012, 349, (3), pp. 1011--1029.  \vspace*{6pt}
   
  \item[{[27]}] Feng, X., Loparo, K., Ji, Y., Chizeck, H.:'Stochastic stability properties
  of jump linear systems', IEEE Transactions on Automatic Control, 1992, 37, (1), pp. 38--53\vspace*{6pt}
  
  \item[{[28]}] Mahmoud, M., Al-Muthairi, N.:'Design of robust controllers for time-delay
  systems', IEEE Transactions on Automatic Control, 1994, 39, (5), pp. 995--999\vspace*{6pt}
  
  \item[{[29]}] Mahmoud, M., AL-Sunni, F., Shi, Y.:'Mixed control of uncertain
  jumping time-delay systems', Journal of the Franklin Institute, 2008, 345, (5), pp. 536--552 \vspace*{6pt}
\end{enumerate}

\end{document}